\newtheorem{remark}{Remark}
\newtheorem{example}{Example}[section]
\title{Saliency Based Control in Random Feature Networks
\thanks{This work was supported by the U.S. Office of Naval Research under MURI Grant Number N00014-10-1-0952.}}
\author{John Baillieul and Zhaodan Kong\thanks{Mar 14, 2014.}}
\begin{document}

\title{Saliency Based Control in Random Feature Networks}
\author{John Baillieul and Zhaodan Kong}
\date{\vspace{-11pt}}

\maketitle 
 \let\thefootnote\relax\footnotetext{John Baillieul and Zhaodan Kong are with the Department of Mechanical Engineering; John Baillieul is also with the Department of Electrical and Computer Engineering and the Division of Systems Engineering at Boston University, Boston, MA 02115. Corresponding author is John Baillieul (Email: johnb@bu.edu). \newline This work was supported by the U.S. Office of Naval Research under MURI Grant Number N00014-10-1-0952.}

\begin{abstract}
The ability to rapidly focus attention and react to salient environmental  features enables animals to move agiley through their habitats.  To replicate this kind of high-performance control of movement in synthetic systems, we propose a new approach to feedback control that bases control actions on randomly perceived features.  Connections will be made with recent work incorporating communication protocols into networked control systems.  The concepts of {\em random channel controllability} and {\em random channel observability} for LTI control systems are introduced and studied.
\end{abstract}

\begin{keywords} 
feature networks, saliency, shared communication medium
\end{keywords}

\section{Introduction}



Very agile motions of athletes and many species of animals depend on high-speed reactions to perceptions of selected features in the environment.  As pointed out in the seminal paper of Linkser (\cite{Linsker}), discovering the principles that underlie perceptual processing is important both for neuroscience and for the development of synthetic perceptual systems.  In what follows, we develop some simple feedback control laws that model reactions to salient elements of rapidly evolving feature networks.  The models make contact with very recent work on networked control systems where sensors and actuators are separated by a communication network of limited capacity. (\cite{Hristu},\cite{Yu}).

Much of the motivation for the work reported below comes from attempts to understand the ways in which animals navigate based on visual perception of environmental features.  More specifically, our previous work in \cite{Kong} and \cite{Sebesta} discusses ways in which a flying animal's perceptions of optical flow can provide cues that enable goal-directed steering through fields of obstacles.  Several steering laws based on optical flows generated both by single features and pairs of features were proposed. The argued biological plausibility of the laws is supported by simulations that reconstruct actual animal flight paths by incorporating appropriate protocols for switching among the laws.  

While the results were encouraging, it is clear that further research is needed in several directions to more completely understand bio-mimetic motion control algorithms.  One limitation of the laws proposed in \cite{Kong} and \cite{Sebesta} is that they utilize only one or two features at a time.  Animals undoubtedly are sensitive and reacting to larger numbers of features as they fly.  It was also the case that the features on which flight simulations were based were chosen somewhat mechanistically. Biomimetic response to optical flow undoubtedly involves dynamic and continual refocusing of attention based on visual saliency (\cite{Tsotsos}, \cite{Weichselgartner}).  Our current research continues to be focused on ephemeral sensed data of the type that is generated by sparse optical flow (\cite{Bru-Wei.2005}). At the same time, the work described in what follows is aimed at developing a conceptual understanding of feedback control that is based on streams of data coming from random elements in a sensor array (e.g.\ photo-receptors in the eye) with each sensor providing data over time intervals that may be quite short relative to the time-constants of the system being controlled.  To ground this work in a familiar context, we focus on finite dimensional time invariant linear systems of the type that are well known to anyone with a beginning graduate course in control theory.

\section{Problem Formulation}

Consider a discrete-time LTI system with $m$ inputs and $q$ outputs, whose evolution and output are given by
\begin{equation}
\begin{array}{rcl}
x(k+1)&=&Ax(k)+Bu(k),\ \ \ x\in\mathbb{R}^n,\ u\in\mathbb{R}^m,\\[0.08in]
y(k)&=&Cx(k), \ \ \ y\in\mathbb{R}^q.
\end{array}
\label{eq:jb:basic}
\end{equation}
Assume that $(A,B)$ is a controllable pair, and $(A,C)$ is an observable pair.  As in \cite{Hristu} and \cite{Yu}. we shall be interested in the evolution and output of (\ref{eq:jb:basic}) when only a portion of the input and output channels are active at any given step.  Hence, we are explicitly assuming that $m>1$, $q>1$.  Unlike these earlier references, however, we shall consider the case that the input and output channels are available randomly.  This rapidly changing random availability of input and output channels will capture an essential aspect of sensory perception and reaction to feature networks in biological and other systems operating in dynamic environments.

To fix ideas, we begin by considering systems with randomly available input channels.  The problem will be to design control inputs that work to steer the system (\ref{eq:jb:basic}) no matter what input channel sequence is encountered.  The following two problem setups are considered:
\smallskip
\begin{itemize}
\item{\bf Problem 1:}  Each of the $m$ input channels has probability $p$ of being active at each time step (and a complementary probability $1-p$ of being unavailable).  At each time step, there is an {\em activity pattern} represented by a vector of $m$ $0$'s and $1$'s.  The vector has a $1$ in the $i$-th place if the $i$-th input channel is active, and a $0$ in the $i$-th place if the channel is inactive.  Any pattern with $k$ $1$'s has probability of $p^k(1-p)^{m-k}$ of being observed.  There are $n\choose k$ such patterns, and assuming the probabilities are independent, the probability of observing exactly $k$ active input channels is ${n\choose k}p^k(1-p)^{n-k}$.
\item{\bf Problem 2:} Only one of the $m$ channels is active at any given time, and the $i$-th channel has probability $p_i$ of being active ($\sum p_i=1$.)  Label the columns of $B$ as $b_1,\dots,b_m$.  Under the protocol of random selection of the input channels, let $\gamma(k)$ denote the channel that is selected at the $k$-th step.  Letting the system (\ref{eq:jb:basic}) operate for $k_f$ steps, the state that is reached from $x(0)=0$ will be
\begin{equation}
\label{eq:jb:sequence}
\begin{array}{ll}
&b_{\gamma({k_f})}u(k_f)+Ab_{\gamma({k_f-1})}u(k_f-1)\\
&\cdots +A^{k_f-1}b_{\gamma(0)}u(0).
\end{array}
\end{equation}
\end{itemize}
Because of space limitations, only the setup in Problem 2 will be considered.  Given the assumption that $(A,B)$ is a controllable pair, we expect that if $k_f$ is sufficiently large, the set
\[
\{ b_{\gamma({k_f})},Ab_{\gamma({k_f-1})},\dots,A^{k_f-1}b_{\gamma(0)} \}
\]
will span $\mathbb{R}^n$.  The following will provide a useful background for proving a general statement.

\begin{lemma}
Let ${\bf X}=\{x_1,\dots,x_n\}\subset\mathbb{R}^n$ be a set of $n$ linearly independent vectors.  Let $\{\gamma(j)\}_{j=0}^{k-1}$ be a random sequence of integers drawn from the uniform distribution on the set $\{1,2,\dots,n\}$.  The probability that the corresponding sequence $\{x_{\gamma(0)},x_{\gamma(1)},\dots,x_{\gamma(k-1)}\}$ spans $\mathbb{R}^n$ is
\[
p(n,k)=\frac{n!S(k,n)}{n^k},
\]
where $S(k,n)$ is the Stirling number of the second kind denoting the number of ways to partition a set of $k$ objects into $n$ nonempty subsets.
\end{lemma}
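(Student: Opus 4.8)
The plan is to reduce the spanning question to a purely combinatorial one about surjective maps. The crucial observation is that $\mathbf{X}=\{x_1,\dots,x_n\}$ consists of $n$ linearly independent vectors in $\mathbb{R}^n$, so it is in fact a basis. Any sub-collection of a basis spans $\mathbb{R}^n$ if and only if it contains all $n$ members: there is no way to span the full space with fewer than $n$ of them, and omitting even a single basis vector destroys spanning. Consequently the sequence $\{x_{\gamma(0)},\dots,x_{\gamma(k-1)}\}$ spans $\mathbb{R}^n$ precisely when every index in $\{1,\dots,n\}$ occurs at least once among $\gamma(0),\dots,\gamma(k-1)$. This recasts the problem as the classical coupon-collector / occupancy question: after $k$ independent uniform draws from an $n$-element set, what is the probability that every element has been drawn at least once?

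First I would fix the sample space. Since each $\gamma(j)$ is drawn independently and uniformly from $\{1,\dots,n\}$, the number of equally likely sequences $(\gamma(0),\dots,\gamma(k-1))$ is $n^k$, which will be the denominator. Next I would count the favorable outcomes. A sequence in which every index appears at least once is exactly a surjection from the $k$-element position set $\{0,1,\dots,k-1\}$ onto the symbol set $\{1,\dots,n\}$. I would then invoke the standard enumeration of surjections, namely that the number of onto maps from a $k$-set to an $n$-set equals $n!\,S(k,n)$. The short justification is that specifying a surjection is equivalent to specifying an \emph{ordered} partition of the domain into $n$ nonempty fibres (the preimages of $1,\dots,n$); the number of \emph{unordered} partitions of a $k$-set into $n$ nonempty blocks is $S(k,n)$ by definition, and the $n$ labels can be attached to these blocks in $n!$ distinct ways.

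Dividing the count of favorable sequences by the size of the sample space yields $p(n,k)=n!\,S(k,n)/n^k$, as claimed. The only step demanding genuine insight is the first one, recognizing that because the $x_i$ form a basis, the linear-algebraic spanning condition collapses to the combinatorial requirement that the index sequence be surjective; everything after that is a routine counting argument. As sanity checks supporting the formula, one may note that $p(n,k)=0$ whenever $k<n$ (consistent with $S(k,n)=0$ in that range, since fewer than $n$ draws cannot cover all indices) and that $p(n,n)=n!/n^n$, matching the obvious requirement that each of the $n$ indices be hit exactly once.
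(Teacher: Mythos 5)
Your proposal is correct and follows essentially the same route as the paper's own proof: identifying spanning sequences with surjective index sequences and counting them as $n!\,S(k,n)$ out of $n^k$ equally likely outcomes. If anything, you make explicit the step the paper leaves implicit---that since $\mathbf{X}$ is a basis, spanning holds if and only if every index appears at least once---which is a small improvement in rigor, not a different argument.
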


\begin{proof}
For the case $k<n$, $S(k,n)=0$, which is of course consistent with the fact that there is no spanning set with fewer than $n$ elements.  Consider the set of all sequences of length $k\ge n$ such that each of the $n$ elements of {\bf X} appears one or more time in the corresponding vector sequence.  To determine the number of ways of assigning the members of {\bf X} to positions in a sequence, we count the number of partitions of sequence positions into $n$ subsets.  This is given by $S(k,n)$.  For each partition, there are $n!$ ways to assign the $n$ elements of {\bf X} to the $n$ partition cells.  Thus, among the total of $n^k$ possible vector sequences of length $k$, $n!S(k,n)$ will span $\mathbb{R}^n$.
\end{proof}

\begin{remark} 
Given well-known properties of Striling numbers of the second kind (\cite{Abramowitz}), it is possible to write $p(n,k)$ explicitly as
\begin{equation*}
\begin{array}{ll}
p(n,k)&=1-n(\frac{n-1}{n})^k+\frac{n(n-1)}{2}(\frac{n-2}{n})^k\\
&+{o}[(\frac{n-2}{n})^k].
\end{array}
\end{equation*}
For any $n$, the probability of a random vector sequence of the type described spanning $\mathbb{R}^n$ approaches $1$ as $k\to\infty$, but for large $n$ this approach can be slow.  (See Fig.\ \ref{fig:jb:SpanProb}.)
\end{remark} 

\begin{figure}[h]
\begin{center}
\includegraphics[width=\columnwidth]{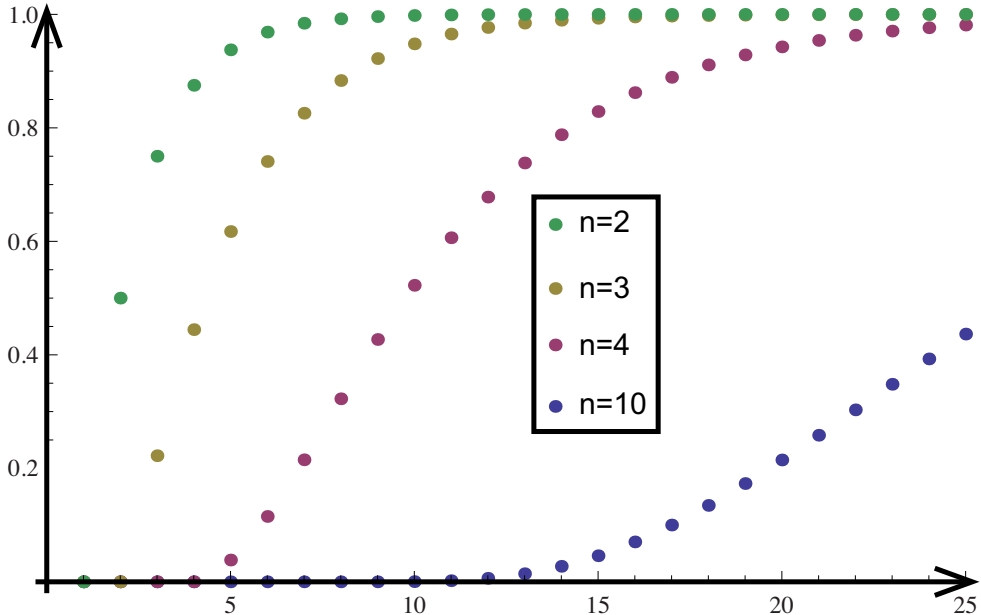}
\end{center}
\caption{For a spanning set {\bf X} of vectors in $\mathbb{R}^n$ for any n, the probability that a random sample of size $k\ge n$ will span approaches $1$ as $k$ becomes large.  The rate of approach decreases for large $n$, as illustrated in the figure (for $n=2,3,4,10$).}
\label{fig:jb:SpanProb}       
\end{figure}

\begin{proposition}
Suppose a random sequence of vectors is created by sequentially drawing elements from {\bf X} \underbar{with replacement}.  The mean number of draws before creating a spanning sequence is
\begin{equation}
{\cal M}_n= \sum _{k=n}^{\infty } k \left(1-\frac{n!}{ n^{k}} S(k,n)\right).
\label{eq:jb:MeanNoSpan}
\end{equation}
\end{proposition}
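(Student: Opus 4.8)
The plan is to treat this as the classical coupon collector problem. Because the $n$ vectors in $\mathbf{X}$ are linearly independent, a drawn sub-collection spans $\mathbb{R}^n$ exactly when it contains all $n$ of them; thus ``the number of draws before creating a spanning sequence'' is precisely the waiting time $T$ until all $n$ distinct elements have been sampled (uniformly, with replacement).

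First I would pin down the distribution of $T$ using the Lemma. The event $\{T \le k\}$ is identical to the event that the length-$k$ sequence $\{x_{\gamma(0)},\dots,x_{\gamma(k-1)}\}$ spans $\mathbb{R}^n$, so the Lemma supplies the cumulative distribution function directly,
\[
P(T \le k) = p(n,k) = \frac{n!\,S(k,n)}{n^{k}},
\]
with $P(T \le k) = 0$ for $k < n$. The point masses then come by differencing, $P(T = k) = p(n,k) - p(n,k-1)$ for $k \ge n$.

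Next I would compute the mean from its definition, $\mathcal{M}_n = E[T] = \sum_{k=n}^{\infty} k\,[\,p(n,k) - p(n,k-1)\,]$, and reduce it to closed form via summation by parts. Telescoping the increments against the weights $k$, and using the boundary value $p(n,n-1)=0$ to discard the lower-end term, turns the expression into a tail sum built from $1 - p(n,k)$. Convergence is underwritten by the Remark, which shows $1 - p(n,k)$ decays geometrically like $n\bigl(\tfrac{n-1}{n}\bigr)^{k}$; hence $k\,(1 - p(n,k)) \to 0$, the boundary term at infinity vanishes, and all rearrangements are justified.

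The step I expect to be the main obstacle is the bookkeeping of this reduction, together with confirming the exact shape of the summand and the starting index. A useful independent check is the elementary coupon-collector value $E[T] = n\sum_{i=1}^{n} \tfrac{1}{i}$; I would evaluate the proposed right-hand side at $n = 2$ and $n = 3$ against this benchmark to make sure the weighting and the lower limit of the sum are correctly placed before committing to the general argument.
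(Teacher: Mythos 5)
Your setup is sound as far as it goes: with $T$ the waiting time until all $n$ vectors have appeared, the identification $P(T\le k)=p(n,k)=\frac{n!}{n^k}S(k,n)$ via Lemma 2.1 is correct. But the summation-by-parts step you defer as ``bookkeeping'' is precisely where the argument fails, and it fails irreparably: Abel summation of $E[T]=\sum_{k=n}^{\infty}k\,\bigl[p(n,k)-p(n,k-1)\bigr]$ produces a tail sum \emph{without} the factor $k$, namely
\begin{equation*}
E[T] \;=\; n+\sum_{k=n}^{\infty}\bigl(1-p(n,k)\bigr) \;=\; n\sum_{i=1}^{n}\frac{1}{i},
\end{equation*}
the classical coupon-collector value. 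This does not equal the proposition's right-hand side $\sum_{k=n}^{\infty}k\,(1-p(n,k))$ for any $n\ge 3$. Your own proposed sanity check would have exposed this: at $n=2$ both expressions give $3$ (a coincidence), but at $n=3$ the coupon-collector mean is $3H_3=5.5$, while the paper's Table 1 lists $\mathcal{M}_3=14.75$, and the table values grow quadratically in $n$ while $nH_n$ grows like $n\log n$. So the displayed $\mathcal{M}_n$ is simply not $E[T]$, and no rearrangement of the telescoping sum will make it so.

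The resolution is that the paper is not computing the expected waiting time at all. Its one-line proof reads the proposition through the phrase ``mean length of non-spanning sequences'': each length $k\ge n$ is weighted directly by the probability $1-p(n,k)$ that a length-$k$ sequence fails to span, and these weighted lengths are summed. This is an unnormalized quantity --- the weights $1-p(n,k)$ do not sum to one and are not the point masses of any random variable --- so it is a nonstandard notion of ``mean,'' and the proposition's wording (``mean number of draws before creating a spanning sequence'') naturally invites exactly the coupon-collector reading you adopted. Your interpretation is the mathematically conventional one, but it cannot yield the stated formula; to recover the paper's claim you must instead take equation (\ref{eq:jb:MeanNoSpan}) as the \emph{definition} of the quantity being discussed, at which point it follows immediately from Lemma 2.1, as the paper's proof asserts.
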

\begin{proof}
From Lemma 2.1 it follows that the probability of a sequence of length $k\ge n$ not spanning $\mathbb{R}^n$ is $ \left(1-\frac{n!}{ n^{k}} S(k,n)\right)$.   The result follows from the definition of mean length of non-spanning sequences of length greater than or equal to $n$..
\end{proof}

\begin{remark} 
We are not aware of any simpler expression for the mean non-spanning sequence length (\ref{eq:jb:MeanNoSpan}).  From Fig. \ref{fig:jb:SpanProb}, and the above remarks, we expect that the mean number of draws before obtaining a spanning sequence will increase with $n$.  For specific values of $n$, (\ref{eq:jb:MeanNoSpan}) can be computed, and several values are given in Table 1.  Simple data analysis suggests that mean length of non-spanning sequences grows quadratically with $n$.
\end{remark} 

\bigskip

\noindent
\begin{footnotesize}
\begin{tabular}{|| c || c|c|c|c|c||}   \hline
$n$ & 2 & 3 & 4 & 5 & 6 \\ \hline 
${\cal M}_n$ &  3.0  & 14.75 & 36.7778 & 71.0486 & 119.19  \\ \hline
$n$ & 7 & 8 & 9 & 10&\\ \hline 
${\cal M}_n$ &182.6  & 262.511 & 360.024 & 476.141& \\  \hline
\end{tabular}
\end{footnotesize}
\smallskip
\begin{center}
{\bf Table 1}
\end{center}

\section{Random Channel Controllability}

Returning to consideration of (\ref{eq:jb:basic}) under input channel constraints, the following definition will be useful.  In this section, we follow \cite{Yu} and \cite{Hristu} and assume that the matrix $A$ in (\ref{eq:jb:basic}) is invertible.  This will enable the discussion  that follows, but in work that will be reported elsewhere, it will be shown,that the case of non-invertible $A$ matrices is of interest in studying random input/output channel access.

\begin{definition}
Let $(A,B)$ be a controllable pair with $A$ $n\times n$ and $B$ $n\times m$ as in (\ref{eq:jb:basic}) .  Denote the columns of $B$ by $b_1\dots,b_m$.  The system (\ref{eq:jb:basic}) is said to be {\em random channel controllable} if for any set of indices $\gamma(0),\dots,\gamma(n-1)$ in which every element of the set $\{1,\dots,m\}$ appears at least once, the corresponding set of vectors
\[
\{b_{\gamma(n-1)},Ab_{\gamma(n-2)},\dots,A^{n-1}b_{\gamma(0)}\}
\]
spans $\mathbb{R}^n$.
\end{definition}

\begin{example}
Suppose $m=n$.  Let
\[
A=\left(\begin{array}{cccc}
\lambda_1 &  0 & 0 & 0\\
0 & \lambda_2 & 0 & 0\\
\vdots & \vdots & \ddots & \vdots \\
0 & 0 & 0 & \lambda_n \end{array}\right)
\ \ B=I.
\]
The system decouples into $n$-scalar systems.  If $\lambda_j\ne 0$ for and $j=1\dots,n$, it is random channel controllable.
\end{example}

\begin{example}
The system
\[
A=\left(\begin{array}{ccc}
\lambda_1 & 0 & 0\\
0 & \lambda_2 & 0 \\
0 & 0 & \lambda_3\end{array}\right),\ \ B=\left(
\begin{array}{cc}
0 & 1\\
1 & 0\\
1 & 0\end{array}\right)
\]
is controllable if $\lambda_2\ne \lambda_3$ but not random channel controllable because
\[
\left(\begin{array}{ccc}
b_1 & Ab_2 & A^2b_2\end{array}\right) = \left(\begin{array}{ccc}
0 & \lambda_1 & \lambda_1^2\\
1 & 0& 0\\
1 & 0 & 0\end{array}\right)
\]
is singular for all choices of $\lambda_1$.
\end{example}

\begin{example}
Let
\[
A=\left(\begin{array}{ccc}
\lambda_1 & 0 & 0\\
0 & \lambda_2 & 0 \\
0 & 0 & \lambda_3\end{array}\right),\ \ B=\left(
\begin{array}{cc}
0 & 1\\
1 & 1\\
1 & 0\end{array}\right).
\]
The system \underbar{is} random channel controllable because all matrices of the form
\[
\left(\begin{array}{ccc}
b_i & Ab_j & A^2b_k\end{array}\right)
\]
are invertible provided it is not the case that $i=j=k$.
\end{example}

With these examples in mind, we state the following.

\begin{proposition}
Consider the system (\ref{eq:jb:basic}) under the assumption that $(A,B)$ is a controllable pair.  Let $\gamma(0),\dots,\gamma(k_f-1)$ be a random sequence drawn sequentially with equal probability and replacement from the index set $\{1,\dots,m\}$.  If $k_f\ge n$, the probability $p(n,k_f)$ that the corresponding random vector sequence
\[
b_{\gamma(k_f-1)},Ab_{\gamma(k_f-2)},\dots,A^{k_f-1}b_{\gamma(0)}
\]
spans $\mathbb{R}^n$ satisfies the following inequality
\begin{equation}
0<p(m,k)\le\frac{m!\,S(k,m)}{m^k},
\label{eq:jb:inequality}
\end{equation}
where $S(k,n)$ is the Stirling number of the second kind denoting the number of ways to partition a set of $k$ objects into $n$ nonempty subsets, and where we have simplified notation by writing $k=k_f$.  If the system (\ref{eq:jb:basic}) is random channel controllable, the right hand inequality holds with equality.  If the system is not random channel controllable, the right hand inequality is strict.
\end{proposition}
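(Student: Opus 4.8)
The plan is to recognize the right-hand side $m!\,S(k,m)/m^k$ as the probability that the random index sequence $\gamma(0),\dots,\gamma(k-1)$ is \emph{surjective} onto $\{1,\dots,m\}$, i.e.\ that every channel is selected at least once. Indeed, the number of surjections from a $k$-element set onto an $m$-element set is exactly $m!\,S(k,m)$ (partition the $k$ positions into $m$ nonempty cells, then assign the $m$ channels bijectively to the cells), and there are $m^k$ equally likely sequences; this is the same counting used in Lemma 2.1. The whole statement then reduces to comparing the spanning event $\mathcal{S}$ with the surjection event $\mathcal{T}$, together with a separate positivity argument.

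First I would establish the inclusion $\mathcal{S}\subseteq\mathcal{T}$, which yields $p(m,k)\le m!\,S(k,m)/m^k$ immediately. The key observation is that the span of the vectors $A^{k-1}b_{\gamma(0)},\dots,b_{\gamma(k-1)}$ is contained in the smallest $A$-invariant subspace generated by those columns $b_i$ that actually occur in the sequence; if some channel is never selected, only a proper subset of the columns occurs, and spanning all of $\mathbb{R}^n$ forces that subfamily to be controllable with $A$. Thus surjection is the relevant necessary condition for spanning. Establishing that dropping a channel genuinely destroys spanning is a first point requiring care.

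For the equality claim under random channel controllability I would argue $\mathcal{T}\subseteq\mathcal{S}$ as well, so that $\mathcal{S}=\mathcal{T}$. By definition random channel controllability guarantees that every length-$n$ index pattern covering all channels spans, and the task is to lift this from length $n$ to length $k\ge n$. Here I would use the standing hypothesis that $A$ is invertible: multiplying a set of vectors by a power of $A$ preserves its rank, so a block of $n$ entries whose index pattern is surjective can be reduced, after factoring out a common power of $A$, to the canonical form $\{b_{\gamma(n-1)},\dots,A^{n-1}b_{\gamma(0)}\}$ in the definition, and hence spans; the full length-$k$ set, being a superset, spans as well. The subtlety here---which I expect to be the main obstacle---is that a surjective sequence of length $k$ need not contain $n$ \emph{consecutive} entries that are themselves surjective, so this reduction must be organized with care, grouping by the cyclic subspaces of the individual channels rather than relying on a single consecutive window.

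Finally, for positivity and the strict-inequality claim I would proceed constructively. Positivity holds because controllability of $(A,B)$ means the vectors $\{A^jb_i\}$ span $\mathbb{R}^n$, so one can exhibit at least one length-$k$ index sequence whose associated vectors have rank $n$; since each particular sequence occurs with probability $m^{-k}>0$, we get $p(m,k)>0$. If the system is \emph{not} random channel controllable, then by definition some surjective length-$n$ pattern fails to span; exhibiting that pattern (and, for $k>n$, a surjective length-$k$ extension that still fails to span) places a positive-probability sequence in $\mathcal{T}\setminus\mathcal{S}$, so the inclusion is strict and $p(m,k)<m!\,S(k,m)/m^k$. Guaranteeing that non-spanning survives the extension to length $k$ is the counterpart of the lifting difficulty above.
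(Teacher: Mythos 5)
Your overall route is the same as the paper's: the paper's proof of Proposition 3.2 is exactly the Lemma 2.1 surjection count ($S(k,m)$ partitions of the $k$ positions, $m!$ labelings of the cells, out of $m^k$ equally likely sequences), followed by the bare assertion that, under random channel controllability, these surjective sequences are precisely the spanning ones. The three difficulties you flag are therefore exactly the steps the paper leaves implicit --- but you close none of them, and the first one, $\mathcal{S}\subseteq\mathcal{T}$, is not a gap that care can fill: it fails at this level of generality. Your own sketch shows why: a non-surjective sequence spans as soon as the subfamily of columns it actually uses is controllable together with $A$, and nothing in the hypotheses (controllability of $(A,B)$, invertibility of $A$, or even random channel controllability itself) excludes that. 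Concretely, take $n=m=2$, $A=\left(\begin{smallmatrix}0&-1\\ 1&0\end{smallmatrix}\right)$, $b_1=(1,0)^T$, $b_2=(1,1)^T$. All four pairs $\{b_i,Ab_j\}$ are linearly independent, so the system is random channel controllable, and since any length-$k$ sequence contains a consecutive pair $A^j b_i,\,A^{j+1}b_{i'}=A^j\{b_i,Ab_{i'}\}$ with $A^j$ invertible, \emph{every} sequence spans: $p(2,k)=1$ for all $k\ge 2$, whereas the claimed bound is $2!\,S(k,2)/2^k=(2^k-2)/2^k<1$. So the right-hand inequality --- and with it the equality/strictness dichotomy --- holds only under an unstated extra hypothesis (no proper subset of channels forms a controllable pair with $A$, as in the diagonal Example 3.1), which neither your argument nor the paper's one-paragraph proof supplies.

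The other two steps you flag also remain open and do not follow from the definition in the way you hope. Random channel controllability constrains only sets with the consecutive powers $b_{\gamma(n-1)},Ab_{\gamma(n-2)},\dots,A^{n-1}b_{\gamma(0)}$; after your factor-out-$A^{j_1}$ reduction of a non-consecutive surjective subselection you are left with exponents $0,\,j_2-j_1,\dots$ that are still generally not consecutive, so $\mathcal{T}\subseteq\mathcal{S}$ is not established by the definition alone --- you correctly anticipated this obstacle but did not resolve it. Likewise, for strictness, a surjective length-$n$ pattern that fails to span says nothing directly about length $k>n$: the extension adds vectors that can restore full rank, so one must exhibit a failing surjective sequence at every length $k\ge n$, which you acknowledge but do not do. The solid piece is positivity, though even there note that controllability of $(A,B)$ gives spanning of $\{A^jb_i\}$ with possibly repeated powers $j$, whereas a schedule uses each power with exactly one channel, so producing a single spanning sequence still requires a short argument. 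In sum: you correctly reverse-engineered the paper's counting argument and correctly located every assertion its proof takes on faith, but the proposal is not a proof, and its first inclusion fails outright without a hypothesis the statement does not contain.
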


\begin{proof}
Consider the set of sequences of length $k\ge n$ having the form
\begin{equation}
b_{\gamma(k-1)},Ab_{\gamma(k-2)},\dots,A^{k-1}b_{\gamma(0)},
\label{eq:jb:randomVectors}
\end{equation}
where each of the $m$ channel indices appears one or more times in the index sequence $\gamma(0),\dots,\gamma(k-1)$.  As in Lemma 2.1, we found the number  of partitions of the set of sequence positions $\{1,\dots, k\}$ into $m$ subsets.  This is $S(k,m)$.  Each such partition  can have its cells labeled by the indices $1,\dots, m$ in $m!$ different ways.  Hence, among the total number of $m^k$ vector sequences of the form (\ref{eq:jb:randomVectors}), $m!S(k,m)$ span $\mathbb{R}^N$, and this shows that
\[
p(m,k)=\frac{m!\,S(k,m)}{m^k}
\]
in the case that (\ref{eq:jb:basic}) is random channel controllable.  When (\ref{eq:jb:basic}) is not random channel controllable, a smaller fraction of vector sequences of the form (\ref{eq:jb:randomVectors}) span $\mathbb{R}^n$, and hence the inequality (\ref{eq:jb:inequality}) becomes strict.
\end{proof}

{\bf Remark:} It might seem surprising that $p(m,k)$ in (\ref{eq:jb:inequality}) is an increasing function of the number $m$ of input channels, but this is not the case, since if there are more channels being randomly chosen, there is a greater likelihood that a non-spanning sequence may occur.  Indeed, in the trivial (mimimum number of channels) case, $m$=1, $p(m,k)\equiv 1$---reflecting the fact that any sequence of $n$ vectors in (\ref{eq:jb:randomVectors}) will span.

{\bf Remark:}  When there are $n$ input channels ($n=$ the dimension of the state), typical mean lengths of non-spanning sequences are given in Table 1.  If we fix the number of channels at $2$, these mean lengths of non-spanning sequence are considerably reduced.  For large values of $n$, the probability of a randomly selected set of vectors of the form (\ref{eq:jb:randomVectors}) spanning is high.

\section{Random Channel Observability}

There is of course a corresponding notion of {\em random channel observability}.  As in Section 3, we assume that $A$ in (\ref{eq:jb:basic}) is invertible.

\begin{definition}
Let $(A,C)$ be an observable pair with $A$ $n\times n$ and $C$ $q\times n$ as in (\ref{eq:jb:basic}) .  Denote the rows of $C$ by $c_1\dots,c_q$.  The system (\ref{eq:jb:basic}) is said to be {\em random channel observable} if for any set of indices $\gamma(0),\dots,\gamma(n-1)$ in which every element of the set $\{1,\dots,q\}$ appears at least once, the corresponding set of vectors
\[
\{c_{\gamma(n-1)}^T,A^Tc_{\gamma(n-2)}^T,\dots,{A^T}^{n-1}c_{\gamma(0)}^T\}
\]
spans $\mathbb{R}^n$.
\end{definition}

{\bf Remark:} As in the case of random channel controllability, not all observable pairs are random channel observable.  The following is dual to Proposition 3.2.

\begin{proposition}
Consider the system (\ref{eq:jb:basic}) under the assumption that $(A,C)$ is an observable pair.  Let $\gamma(0),\dots,\gamma(k_f-1)$ be a random sequence drawn sequentially with equal probability and replacement from the index set $\{1,\dots,q\}$.  If $k_f\ge n$, the probability $p(q,k_f)$ that the corresponding random vector sequence
\[
c_{\gamma(k_f-1)},A^Tc_{\gamma(k_f-2)}^T,\dots,{A^T}^{k_f-1}c_{\gamma(0)}^T
\]
spans $\mathbb{R}^n$ satisfies the following inequality
\begin{equation}
0<p(q,k)\le\frac{m!\,S(k,q)}{q^k},
\label{eq:jb:inequality2}
\end{equation}
where $S(k,q)$ is the Stirling number of the second kind denoting the number of ways to partition a set of $k$ objects into $q$ nonempty subsets, and where we have simplified notation by writing $k=k_f$.  If the system (\ref{eq:jb:basic}) is random channel controllable, the right hand inequality holds with equality.  If the system is not random channel controllable, the right hand inequality is strict.
\end{proposition}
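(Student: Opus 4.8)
The plan is to deduce this statement directly from Proposition 3.2 via the standard controllability--observability duality, so that essentially no new combinatorics is required. The key observation is that the random channel observability of $(A,C)$ is \emph{precisely} the random channel controllability of the transposed pair $(A^{T}, C^{T})$. Once that identification is made, the probability bound, the equality case, and the strictness case all transfer verbatim.

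First I would set up the dictionary between the two problems. Writing the rows of $C$ as $c_1,\dots,c_q$, define $A' := A^{T}$ and $B' := C^{T}$, so that the columns of $B'$ are exactly $c_1^{T},\dots,c_q^{T}$. Since $A$ is assumed invertible, $A^{T}$ is invertible as well, and hence the standing hypotheses under which Proposition 3.2 was established hold verbatim for the pair $(A',B')$, which now has $q$ input channels. For any index sequence $\gamma(0),\dots,\gamma(k-1)$ drawn with equal probability and replacement from $\{1,\dots,q\}$, the vector sequence
\[
c_{\gamma(k-1)}^{T},\, A^{T}c_{\gamma(k-2)}^{T},\, \dots,\, {A^{T}}^{k-1}c_{\gamma(0)}^{T}
\]
appearing in the statement is identical, term by term, to the random channel controllability sequence $b'_{\gamma(k-1)},\, A' b'_{\gamma(k-2)},\, \dots,\, (A')^{k-1} b'_{\gamma(0)}$ for $(A',B')$. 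Consequently the event that the former spans $\mathbb{R}^{n}$ coincides with the event that the latter does.

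With this identification in hand, I would simply invoke Proposition 3.2 applied to $(A',B')$. That proposition yields
\[
0 < p(q,k) \le \frac{q!\,S(k,q)}{q^{k}},
\]
with equality exactly when $(A',B')$ is random channel controllable and strict inequality otherwise. Unwinding the dictionary, $(A',B')$ is random channel controllable if and only if $(A,C)$ is random channel observable, which produces the asserted dichotomy and completes the argument.

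The only point requiring care---and the main, though modest, obstacle---is verifying that this duality is faithful at the level of the counting argument inside Proposition 3.2: that the condition ``every element of $\{1,\dots,q\}$ appears at least once'' and the tally of spanning versus non-spanning index sequences both transfer exactly under the map $C \mapsto C^{T}$, $A \mapsto A^{T}$. Because transposition is a linear isomorphism that preserves rank and linear independence, the set of index sequences producing a spanning family is carried bijectively to itself, so the fraction $q!\,S(k,q)/q^{k}$ and the equality/strictness statements are preserved without change.
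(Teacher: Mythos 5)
Your proof is correct and takes essentially the paper's intended route: the paper omits the proof as ``virtually the same as that of Proposition 3.2,'' and your transpose duality $(A,C)\mapsto(A^{T},C^{T})$ is precisely the formal reduction that makes this rigorous, since the displayed observability sequence is term-by-term the controllability sequence of the pair $(A^{T},C^{T})$ (whose $q$ input columns are $c_1^{T},\dots,c_q^{T}$, with $A^{T}$ invertible whenever $A$ is) and Definition 4.1 is by inspection random channel controllability of that pair. As a side benefit, unwinding your reduction shows that the bound in (\ref{eq:jb:inequality2}) should read $q!\,S(k,q)/q^{k}$ rather than $m!\,S(k,q)/q^{k}$, and that ``random channel controllable'' in the equality and strictness clauses should read ``random channel observable''---apparent typographical slips in the statement that your argument silently corrects.
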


\medskip

The proof is virtually the same as that of Proposition 3.2 and hence is omitted.

\section{Feedback stabilization with random input channel access}

The discussion in Section 3 is focused on conditions under which there exists a sequence of inputs that will steer a controllable system between prescribed initial and final states.  It does not prescribe an algorithm for sequentially choosing the inputs that achieve any particular control objective.  Following the work in \cite{Yu} and \cite{Hristu}, we consider the problem of stabilizing an LTI system with random input channel access of the type discussed above.  Because randomness is an essential feature of our problem, feedback stabilization can only be achieved in a probabilistic sense.  For systems with unstable modes, the controller's random access to input channels will make meaningful stabilization challenging, and feedback gains must be chosen to account for the probability that channels will be inaccessible for certain periods of time.  For the probabilistic access models under consideration, occasional randomly occurring inaccessibility of an input channel will lead to large excursions of the system dynamics.  It will be argued that such excursions will be statistically inevitable, and they will occur over time intervals with typical lengths that are similar to the time-to-failure intervals described in \cite{Nair}.

To fix ideas, consider Example 3.1.  It follows from Proposition 3.2 that when $n$ is large, it will require a relatively large number of input steps to produce a controlled motion in every dimension of the state space.  Indeed, when channels are selected sequentially and randomly with equal probability equal to $1/n$, the mean waiting time for any channel to be selected is exactly $n$.  This follows from simply realizing that the selection of each channel is a Bernoulli process with probability of selection $p=1/n$.  A further consequence of properties of mean waiting times in Bernoulli processes is that the variance of channel selections is $(1-p)/p^2 \sim n^2$.  This means that the mean trajectories of the scalar components $x_1(t),\dots,x_n(t)$ will not adequately characterize the dynamics of the system since with high probability there will be large waiting times between successive channel availabilities.

To further understand the complexity of this example, we consider a simple scalar process governed by a Bernoulli random switching.  Specifically suppose a real-valued process $x(\cdot)$ evolves according to

\begin{equation}
x(k+1)=\left\{\begin{array}{ccl}
ax(k)& with\  prob & p \\
bx(k) & with\  prob  & 1-p.\end{array}\right.
\label{eq:jb:singleChannel}
\end{equation}

\noindent The mean of this process is deterministic and at the $k-$th step has the value $(pa+(1-p)b)^kx(0)$.  The mean is thus asymptotically stable precisely when $|pa+(1-p)b|<1$.  The second moment of the process will be asymptotically stable when $|pa^2+(1-p)b^2|<1$, however, and from this, it is easy to see that there are processes that are stable in the mean but which have unbounded variance.  This may be problematic for systems depicted in Example 3.1 with unstable open-loop modes.  In these cases,  $a=\lambda_j$ and $p =  1-\frac1n$.  We can examine this more explicitly by considering the first and second moments of the process (\ref{eq:jb:singleChannel}).  The mean process is the simple linear equation $x(k+1)=\left[ pa+(1-p)b\right] x(k)$.  Restricting to non-oscillatory trajectories, the mean of the process will be asymptotically stable when $0<pa+(1-p)b<1$, which requires that $pa<1$.  For the case of Example 3.1, $p=(n-1)/n$ and the maximum size of an open-loop unstable mode, $\lambda_j$, can be written as a function of $n$: $a_n<n/(n-1)$.  

The second moment of the process (\ref{eq:jb:singleChannel}) will similarly be stable when $pa^2 + (1-p)b^2<1$.  Simple algebra again shows that the maximum size of a mode $\lambda_j$ leading to a stable second moment is $\lambda_j=a_n<\sqrt{(n/(n-1)}$. It is perhaps not surprising that both these bounds are similar to the stabilizability and error tolerance conditions associated with the data-rate theorem, \cite{Nair}.  Returning to the remark that Bernoulli processes have fat tailed distributions, we note that even when a feedback law stabilizes the means of the eigenmodes of a system of the form given in Example 3.1, the variance may be unbounded.  This is illustrated by the simulations plotted in Fig. \ref{fig:zk:simulation}.

\begin{figure}[h]
\begin{center}
\includegraphics[width=\columnwidth]{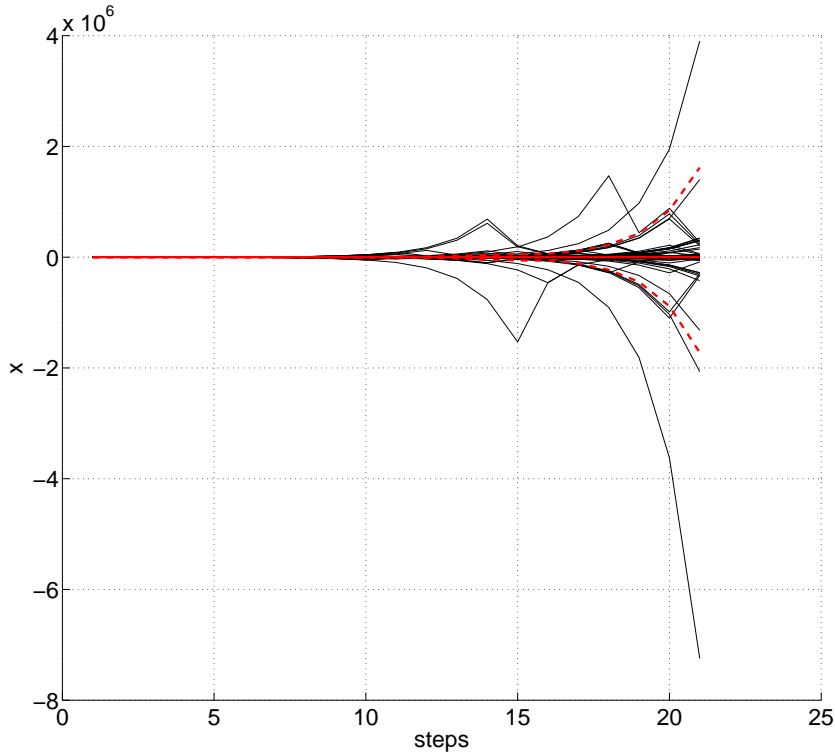}
\end{center}
\caption{The trajectories are from a system of the form of Example 3.1.  Here trajectories for a single mode are plotted, but the system has an $n$=3 dimensional state, all three open loop modes are $\lambda=3$, and the stabilizing feedbacks $k_1=k_2=k_3=-2.7$, rendering the closed loop modes equal to $0.3$.  The solid red curve is the mean of 1000 trajectories, the dotted red curves denote the empirical variance, and the black curves are 500 of the simulated trajectories---which clearly make major excursions from the mean. }
\label{fig:zk:simulation}       
\end{figure}

\section{Conclusions}

The paper has introduced problems of feedback control in which sensor and actuator connections to a plant are achieved  through randomly accessed communication channels.  The motivation for the discussion is biological navigation through feature networks (e.g. visually perceived features).  While this is most naturally related to random channel observability problems, limitations of space have led us to consider only random channel control problems.  The paper has introduced what are believe to be new notions that we have called {\em random channel controllability} and {\em random channel observability}.  These are stronger than the usual notions of controllability and observability in LTI systems.  These concepts suggest that we may think of a set of input (or output) channels as being salient if any $n$-step combination involving all of them is enough to steer the system to an arbitrary goal point.

The main results of the paper point to an increased likelihood of a random sequence of input channels providing controllability (observability) when the number of channels is not too large.  Specifically this is suggested by Propositions 3.2 and 4.2 as well as the example treated in Section 5.  It is partly  a consequence of the need to use all available channels to realize controllability combined with the restriction that only a single channel can be used at a time.  While we shall report some relaxation of the single channel restriction elsewhere, (and consider Problem 1 of the Introduction), we believe it is also worth observing that the limits on stabilization for multi-channel systems  may point to fundamental limits in the amount of information that can be processed to achieve desired control objectives.  Finally, we remark that future work will be aimed at developing probabilistic extensions of the output stabilizability results  in \cite{Yu} and \cite{Hristu} to the case of systems with random access input and output channels.



\section*{Acknowledgments}
The authors thank colleagues Xi Yu and Sean Andersson for the motivation provided by \cite{Yu}, and Kenn Sebesta for advocating visual sensing for controlling movement (\cite{Sebesta}). 


\end{document}